\documentclass{article}
\usepackage{spconf,amsmath,graphicx}
\usepackage{algorithm}
\usepackage{algorithmic}
\usepackage{subfigure}
\usepackage{hhline}  
\usepackage{amsmath,amssymb,amsfonts}
\usepackage{graphicx}
\usepackage{textcomp}
\usepackage{soul}
\usepackage{pifont}
\usepackage{xcolor}

\usepackage{multirow}
\usepackage{booktabs}
\usepackage{amsthm}
\newtheorem{theorem}{Theorem}[section]

\usepackage{amsfonts}

\title{MarkSweep: A No-box Removal Attack on AI-Generated Image Watermarking via Noise Intensification and Frequency-aware Denoising}
\name{Jie Cao  \qquad Zelin Zhang \qquad Qi Li \qquad Jianbing Ni}
  
  \address{Department of Electrical and Computer Engineering, Queen’s University, Kingston, Canada}

\begin{document}
\ninept
\maketitle
\begin{abstract}
AI watermarking embeds invisible signals within images to provide provenance information and identify content as AI-generated. In this paper, we introduce \textit{MarkSweep}, a novel watermark removal attack that effectively erases the embedded watermarks from AI-generated images without degrading visual quality. \textit{MarkSweep} first amplifies watermark noise in high-frequency regions via edge-aware Gaussian perturbations and injects it into clean images for training a denoising network. This network then integrates two modules, the learnable frequency decomposition module and the frequency-aware fusion module, to suppress amplified noise and eliminate watermark traces. Theoretical analysis and extensive experiments demonstrate that invisible watermarks are highly vulnerable to \textit{MarkSweep}, which effectively removes embedded watermarks, reducing the bit accuracy of HiDDeN and Stable Signature watermarking schemes to below 67\%, while preserving perceptual quality of AI-generated images.
\end{abstract}
\begin{keywords}
Image watermark, text-to-image model, watermark removal, copyright protection
\end{keywords}
\section{Introduction}
\label{sec:intro}

Generative Artificial Intelligence (Gen-AI) has emerged as a transformative paradigm capable of producing highly realistic and creative images. However, as the visual quality of these images continues to improve, distinguishing AI-generated content from human-created images has become increasingly difficult, raising significant risks such as misinformation and intellectual property violations. Invisible watermarking has gained prominence as a key technique for provenance verification and content attribution, addressing growing concerns over the misuse of AI-generated images~\cite{ren2024sok}. Traditional post-hoc watermarking methods, such as DwtDctSvd~\cite{navas2008dwt} and HiDDeN~\cite{zhu2018hidden}, can embed watermarks into AI-generated images after generation, while recently advanced in-generation watermarking can embed provenance signals directly during image generation~\cite{wen2023tree,fernandez2023stable,wang2024sleepermark}. These embedded watermarks can be detected with high bit accuracy, often exceeding 90\%, by the designated extractor during verification. 

Watermark removal attacks~\cite{zhao2024sok} have been proposed to erase embedded watermarks from AI-generated images, while preserving the perceptual fidelity of the host images, raising serious concerns about intellectual property violations and the potential misuse of generative models (see Fig.~\ref{fig:overview}). These attacks can be broadly categorized into three types: regeneration attacks~\cite{zhao2024invisible,an2024waves}, adversarial attacks~\cite{jiang2023evading,kassis2025unmarker}, and model-related attacks~\cite{hu2024stable,wang2024sleepermark}. 
Regeneration attacks reconstruct watermarked images using generative models such as variational autoencoders (VAEs) or diffusion models. Adversarial attacks introduce imperceptible perturbations to disrupt watermark detection, thereby increasing decoding errors while maintaining visual quality. Model-related attacks, in contrast, directly modify the parameters of generative models (e.g., diffusion decoders) so that all future outputs are watermark-free without altering individual images.
Although these attacks have demonstrated promising results, their effectiveness typically relies on restrictive assumptions: (i) access to a large number of watermarked images, often paired with clean counterparts; (ii) knowledge of full model parameters or high-fidelity surrogates; (iii) tolerance for significant image distortion; and (iv) dependence on complex architectures or iterative optimization, which slows down the attack process. These limitations highlight the need for a removal method with stronger generalization capabilities and greater removal effectiveness while preserving perceptual quality.

In this paper, we propose a new watermark removal attack, named \textit{MarkSweep}, to erase invisible image watermarks. Specifically, we first amplify watermark noise in high-frequency regions using edge-aware Gaussian perturbations, enabling effective denoising, and then design an end-to-end denoising network augmented with the learnable frequency decomposition module (LFDM) and the frequency-aware fusion module (FaFM) to accurately reconstruct watermark-free images. \textit{MarkSweep} is novel in that attackers do not need access to the designated extractors or paired clean counterparts. Instead, the removal is performed in no-box setting: the attacker requires only the target watermarked images to erase embedded signals. 
Finally, we demonstrate that \textit{MarkSweep} can effectively remove embedded watermarks for state-of-the-art (SOTA) schemes, reducing the bit accuracy of HiDDeN~\cite{zhu2018hidden} and Stable Signature~\cite{fernandez2023stable} to below 67\%, lower than their detection threshold, while preserving the perceptual quality of AI-generated images. Moreover, the time required to erase the watermark from a target image is under 1 second, significantly faster than the time of existing attacks, such as DiffusionAttack~\cite{zhao2024invisible} and UnMarker~\cite{kassis2025unmarker}.


\begin{figure}[t]
    \centering
    \includegraphics[width=\linewidth]{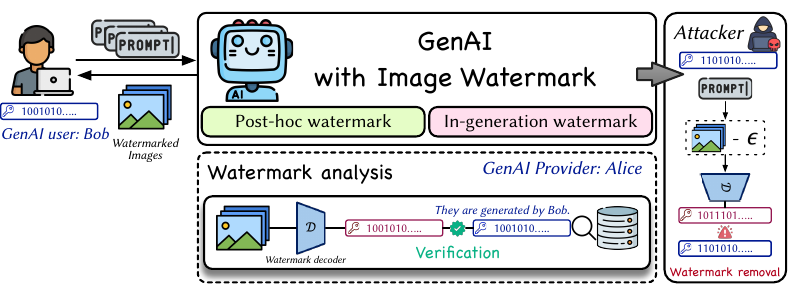}
    \vspace{-2em}
    \caption{Overview. Alice embeds watermarks in AI-generated images for attribution, but attackers can remove them, enabling the unauthorized dissemination of AI-generated content.}
    \label{fig:overview}
\vspace{-2em}
\end{figure}

\section{Background}

\subsection{Problem Formulation}
Let $x \in \mathbb{R}^{H \times W \times 3}$ denote a clean image, and $w \in \{0,1\}^k$ denote a binary watermark. A watermarking system consists of an encoder $E(x, w) = x_w$ that embeds $w$ into $x$, and a decoder $\mathcal{D}(x_w) = \hat{w}$ that extracts $w$. Watermark verification is successful when the bit accuracy (BA) (see Eq.~\ref{eq:bit acc}) exceeds a threshold chosen to achieve a target False Positive Rate (FPR). Th threshold $\tau$ is predefined so that the FPR remains below an acceptable level (e.g., $10^{-6}$), preventing clean images from being falsely detected as watermarked~\cite{fernandez2023stable,yang2024gaussian}.
\vspace{-1em}
\begin{equation}\label{eq:bit acc}
\text{BA}(w, \hat{w}) = \frac{1}{m} \sum_{i=1}^{m} \mathbb{I}[w_i = \hat{w}_i].
\vspace{-2em}
\end{equation}

\subsection{Threat Model}
\textbf{Adversary’s goals.} The adversary aims to craft a perturbation $\epsilon$ such that the perturbed image $\tilde{x} = x_w - \epsilon$ evades watermark detection, while keeping $\tilde{x}$ visually close to both the clean image $x$ and the watermarked image $x_w$.

\textbf{Adversary’s knowledge and capability.} 
We adopt the no-box setting of~\cite{kassis2025unmarker}, in which the adversary has no access to the internals of the generative model 
$\mathcal{G}$, the watermark encoder $\mathcal{E}$, or the decoder $\mathcal{D}$. The adversary may only leverage public and unwatermarked datasets as well as has limited short-term GPU access, which precludes training a surrogate generative model.

\section{METHODOLOGY}

\subsection{Overview}
As illustrated in Fig.~\ref{fig:method_pipeline}, \textit{MarkSweep} removes invisible watermarks through a noise–intensification–denoising pipeline. Given a watermarked image \(x_w\), we first apply noise intensification to emphasize high-frequency watermark patterns, producing \(x_n\). A pre-trained ResNet-50 encoder~\cite{he2016deep} then extracts visual features, which are decomposed into multi-frequency representations by the LFDM. The FaFM further refines and fuses these features, suppressing amplified noise. Finally, a decoder reconstructs the clean image \(\hat{x}\), followed by a frozen RealESRGAN~\cite{wang2021real} module to enhance image details and overall perceptual quality.

\begin{figure}[h]
    \centering
    \includegraphics[width=\linewidth]{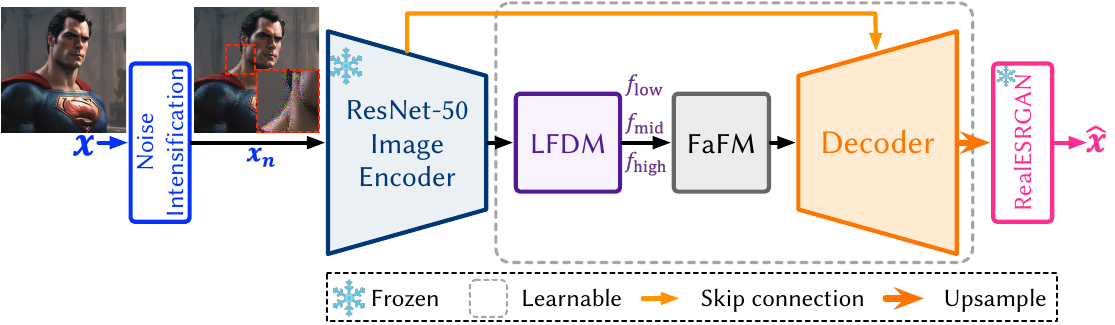}
        \vspace{-1em}
        \caption{An illustration of MarkSweep pipeline.}
    \label{fig:method_pipeline}
            \vspace{-2em}
\end{figure}

\subsection{Intensification of Watermark Noise}

In invisible image watermarking, the embedding process is typically confined to high-frequency regions~\cite{kassis2025unmarker,wu2025there} (e.g., edges and textures), to which the human visual system is less sensitive. This strategy allows the watermark to remain imperceptible while preserving visual quality. Building on this observation, we define the pixel-wise residual as $\Delta_{w} = x - x_{w},$ treating the watermark as additive noise. An effective removal attack then aims to estimate a perturbation \(\epsilon\) that closely approximates the watermark signal \(\Delta_{w}\):
\vspace{-0.5em}
\begin{equation}\label{eq:diff}
\epsilon^* = \arg\min_{\epsilon} \| \epsilon - \Delta_w  \|_p.
\vspace{-0.5em}
\end{equation}

However, our empirical analysis shows that the residual noise has very low magnitude, approximately $\|\Delta_w \| \sim \mathcal{N}(\mu,\sigma^2)$ with $\mu \in [2,10]$, 
which makes direct removal challenging even for SOTA denoising networks when watermarked images are unavailable.

To overcome this limitation, we introduce a noise intensification method defined as
$x_n = n + x$, where $n = \mathcal{A}(x,\mu,\sigma,s)$, which amplifies high-frequency components to make watermark patterns more distinguishable. Specifically, we inject non-uniform Gaussian noise $n\sim\mathcal{N}(\mu,\sigma^{2})$ into edge regions detected by the Canny operator, and apply morphological dilation (structuring element size $s$) to distinguish core and extended edges. A gradient-weighted mask then imposes stronger perturbations on core edges and weaker perturbations on surrounding regions, thereby approximating the distribution of watermark noise. This noise-enhanced image \(x_n\) enables more effective watermark removal in the subsequent denoising stage.
\vspace{-1em}

\subsection{Denoising Network} 
As shown in Fig.~\ref{fig:method_pipeline}, in this stage, we propose a denoising network $\mathcal{M}_{\theta}$, where $\theta$ denotes the trainable parameters. The network adopts an encoder–decoder architecture to remove amplified watermark noise. We first employ a ResNet-50 pre-trained on ImageNet as the backbone encoder $\mathcal{E}(\cdot)$, which extracts feature maps $f = \mathcal{E}(x_{n}) \in \mathbb{R}^{c\times h \times w}$ from the noise-enhanced input $x_{n}$.

\textbf{LFDM.} Both Gaussian and watermark noise predominantly reside in high-frequency components, offering critical spectral cues for effective removal. We integrate a LFDM into the denoising network, mapping features from the spatial to the frequency domain. A learnable band-pass filter decomposes signals into low-, mid-, and high-frequency bands, enabling adaptive processing for each band and stronger suppression of noise in high-frequency regions. 

Given an input feature map $f$, we first transform it into the frequency domain feature $\mathbf{F}$ using the 2D Fast Fourier Transform (FFT) $\mathcal{F}$. Then, we introduce learnable thresholds $\boldsymbol{\gamma} = \{\gamma_0,\gamma_1\}$ to define the complete frequency band boundaries $\mathbf{b} = \{0, \gamma_0, \gamma_1,1\}$.

For band-pass filtering, the three frequency bands are processed using the respective masks defined as:
\vspace{-0.5em}
\begin{equation}
 M_i = \phi\big(k(R - b_{i})\big) \odot \phi\big(k(b_{i+1} - R)\big), b_i\in\mathbf{b},
     \vspace{-0.5em}
\end{equation}
where $\phi$ denotes the Sigmoid function, $R$ is the normalized radial frequency in the Fourier domain; $k = 10$ controls transition sharpness between bands, and $\odot$ indicates element-wise multiplication. Each band response is computed as $f_i = \mathcal{F}^{-1}(\mathbf{F} \odot M_i) \cdot w_i$, in which 
$w_{i}$ is a learnable weight and $\mathcal{F}^{-1}$ is the inverse FFT. Finally, the LFDM outputs the band-specific features $\{f_{\text{low}}, f_{\text{mid}}, f_{\text{high}}\}$.

\textbf{FaFM.} To effectively integrate features from different frequency bands, we propose a novel FaFM. This module is designed to intelligently aggregate three frequency bands in a hierarchical and adaptive manner. FaFM comprises a joint attention mechanism and a final feature fusion stage.

FaFM first employs a joint attention mechanism that simultaneously models inter-band channel dependencies and shared spatial saliency within a unified representation space. Specifically, the band features are concatenated along the channel dimension $\tilde{f} = [f_{\text{low}}; f_{\text{mid}}; f_{\text{high}}]$, with $\tilde{f} \in \mathbb{R}^{3c\times h \times w}$, and a channel-attention function $\mathcal{A}_{c}(\cdot)$ is applied to produce the attention weights $W_c$:
\vspace{-0.5em}
\begin{equation}
    W_c = \phi(\mathrm{Conv}_{1\times1}(\delta(\mathrm{Conv}_{1\times1}(\mathrm{GAP}(\tilde{f}))))),
\vspace{-0.5em}
\end{equation}
where $\phi$ denotes ReLU function.

Channel-wise band averages are concatenated and processed by spatial attention $\mathcal{A}_{s}(\cdot)$ to obtain $W_s$:
\vspace{-0.5em}
\begin{equation}
    W_s = \phi\!\left(\mathrm{Conv}_{7\times7}\left([\mathrm{Avg}_{ch}(f_i)]_{i \in \{\text{low}, \text{mid}, \text{high}\}}\right)\right).
\vspace{-0.5em}
\end{equation}

We apply the joint attention mechanism and incorporate a channel attention weight $W_c \in \mathbb{R}^{c \times 3 \times 1 \times 1}$ to capture frequency-specific channel dependencies, along with a spatial attention weight $W_s \in \mathbb{R}^{1\times1\times h \times w}$ to emphasize the important spatial regions within each frequency band. 

Finally, the weighted features are obtained via element-wise multiplication and then fused by averaging over the frequency-band dimension to produce the final representation $f^{*}\in \mathbb{R}^{c \times h \times w}$:
\begin{equation}
    f^{*} = \operatorname{Avg}(\operatorname{Stack}[f_{\text{low}}; f_{\text{mid}}; f_{\text{high}}]\odot W_c \odot W_s).
    \vspace{-0.5em}
\end{equation}

\textbf{Decoder.} To reconstruct the final denoised image, we employ a learnable decoder that is structurally symmetric to the ResNet-based encoder. Following the conventional UNet's decoder structure~\cite{ronneberger2015u}, the decoder comprises a sequence of upsampling modules. Each module doubles the feature map resolution via bilinear interpolation, applies a convolution to adjust the number of channels, and uses a skip connection to concatenate the upsampled features with the corresponding high-resolution encoder feature map. This preserves fine spatial details that may be lost during upsampling. Through this process, the decoder progressively transforms the deepest fused representation $f^{*}$ back to the original input resolution, ultimately generating the final denoised image $\hat{x}$.

\textbf{Objective Function.} To ensure that the denoised images retain both visual similarity and perceptual quality with respect to the clean images, we train $\mathcal{M}_{\theta}$ by minimizing the following loss function:
\begin{equation}
\begin{aligned}
  \mathcal{L}(\theta)&= \mathbb{E}_{x_n, x}\big[ 
\lambda_1 \mathcal{L}_{\text{LPIPS}}(\hat{x}, x) 
+ \lambda_2 \mathcal{L}_{\text{MSE}}(\hat{x}, x) \\&+ \lambda_3\mathcal{L}_\text{FFT}(\hat{x}, x)
 + \lambda_4 \mathcal{L}_{\text{MSE}}(x_n-\hat{x}, n)
\big] ,
\end{aligned}
    \vspace{-0.5em}
\end{equation}
where $\mathcal{L}_{\text{MSE}}$ and $\mathcal{L}_{\text{LPIPS}}$ denote the mean squared error and Learned Perceptual Image Patch Similarity (LPIPS) loss~\cite{zhang2018unreasonable} between the denoised image $\hat{x}$ and the clean image $x$, measuring pixel-level and perceptual similarity. To complement the pixel-domain loss, we employ $\mathcal{L}_\text{FFT}$ to constrain the consistency of the amplitude spectrum in the 2D frequency domain. In addition, $\mathcal{L}_{\text{MSE}}(x_n-\hat{x}, n)$ is used to supervise the estimation of the noise added to the clean image. 

Additionally, since our method intensifies and denoises noise primarily in high-frequency edge regions, slight blurring may occur.  
To mitigate this, we apply a pre-trained Real-ESRGAN model~\cite{wang2021real} for same-size super-resolution (SR) as shown in Fig.~\ref{fig:method_pipeline}, enhancing perceptual detail and texture sharpness.

  

\section{Theoretical Analysis}
In this section, we theoretically demonstrate that \textit{MarkSweep} is capable of effectively removing invisible image watermarks. 

\begin{theorem}[Information-Theoretic Guarantee of Watermark Removal] \label{th:1}
Let $x$ denote a clean image and $w$ be a binary watermark. For the Markov chain 
\begin{equation}
w \rightarrow x_w \rightarrow x_w+n \rightarrow \hat{x}=\mathcal{M}_\theta(x_w+n),
    \vspace{-0.5em}
\end{equation}
with noise intensification $n=\mathcal{A}(x_w,\mu,\sigma,s)$ and denoising network $\mathcal{M}_\theta$, the mutual information satisfies:
\begin{equation}
\boxed{I(w;\hat{x}) \leq I(w;x_w+n) \leq I(w;x_w)}.
\end{equation}
\end{theorem}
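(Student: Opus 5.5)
The natural route is the data processing inequality (DPI), applied twice along the chain $w \rightarrow x_w \rightarrow x_w+n \rightarrow \hat{x}$. The only real work is to check that each arrow is a genuine Markov transition: conditional on its immediate predecessor, each variable must be independent of everything upstream, in particular of $w$.

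\textbf{Step 1: the first link.} Write $y = x_w + n$ with $n = \mathcal{A}(x_w,\mu,\sigma,s)$. The intensification operator builds its mask (Canny edges, dilation with element size $s$, the gradient weighting) from $x_w$ alone. Its only extra randomness is Gaussian noise, drawn independently of $(w,x)$ with fixed parameters $(\mu,\sigma,s)$. So $y = g(x_w,\xi)$ for a measurable $g$ and an exogenous variable $\xi \perp (w, x_w)$. This gives $p(y \mid x_w, w) = p(y \mid x_w)$, i.e.\ $w \rightarrow x_w \rightarrow y$ is Markov.

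\textbf{Step 2: the second link.} Once training is done, $\theta$ is fixed, and it was learned only from clean public images with synthetic noise, so it carries no dependence on $w$. Hence $\hat{x} = \mathcal{M}_\theta(y)$, including the frozen Real-ESRGAN stage, is a deterministic function of $y$. That makes $w \rightarrow y \rightarrow \hat{x}$ Markov. Together with Step 1 the full chain is Markov, since $p(\hat{x}\mid y, x_w, w) = p(\hat{x}\mid y)$.

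\textbf{Step 3: apply the DPI.} Expand $I(w; x_w, y)$ by the chain rule in two ways:
\[
I(w;x_w) + I(w;y\mid x_w) \;=\; I(w;y) + I(w;x_w\mid y).
\]
The Markov property makes $I(w;y\mid x_w)=0$, and $I(w;x_w\mid y)\ge 0$, so $I(w;y)\le I(w;x_w)$. Repeating the argument on $w \rightarrow y \rightarrow \hat{x}$ gives $I(w;\hat{x}) \le I(w;y)$. Chaining the two yields the boxed inequality.

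\textbf{Main obstacle.} The difficulty is not the calculation but making the Markov conditions rigorous. Two points need care.
\begin{itemize}
\item The statement says $n=\mathcal{A}(x_w,\cdot)$, but the training section defines $n=\mathcal{A}(x,\cdot)$ on clean images. I will state explicitly that at attack time the mask is computed from $x_w$ and the noise seed is independent of $w$.
\item $\theta$ must be treated as fixed, or at least independent of $w$; this holds because training never sees watermarked images.
\end{itemize}
For continuous images, mutual information should be read in the general measure-theoretic sense. The DPI holds there as well, so no densities are needed.

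Finally, the theorem only shows that information about $w$ does not increase. Strict decrease, which is what removal actually needs, would require additional quantitative assumptions and is not claimed by the statement.
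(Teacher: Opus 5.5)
Your proposal is correct and follows the paper's route. The paper writes $y = x_w + n$ with exogenous randomness $r \perp (w,x_w)$, assumes the factorization $p(w)\,p(x_w\mid w)\,p(y\mid x_w)\,p(\hat{x}\mid y)$, and applies the data processing inequality twice. You additionally justify that factorization (the mask is computed from $x_w$, $\theta$ is fixed and independent of $w$, and $\mathcal{M}_\theta$ is deterministic) and derive the DPI from the chain rule, whereas the paper simply posits the factorization and cites the DPI.
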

\vspace{-0.5em}

\noindent\textbf{\textit{Proof.}} 
Let $y = x_w + n$ with $n=\mathcal{A}(x_w,\mu,\sigma,s, r)$, where $r \perp (w,x_w)$ denotes exogenous randomness independent of $(w,x_w)$.
We assume that the joint probability density function of all random variables involved in the Markov chain can be factorized as:
\begin{equation}
p(w,x_w,y,\hat{x})=p(w)\,p(x_w|w)\,p(y\,|\,x_w)\,p(\hat{x}\,|\,y),
\end{equation}
\vspace{-0.5em}
According to the data processing inequality~\cite{beaudry2011intuitive}, we obtain:
\begin{equation}
I(w;\hat{x}) \leq I(w;y) \leq I(w;x_w).
    \vspace{-0.5em}
\end{equation}

In watermarking systems, $w$ can be decoded from $x_w$ with high accuracy, so $I(w;x_w)\approx H(w)$~\cite{mittelholzer1999information}. Noise intensification and denoising weaken this dependency, yielding 
\begin{equation}
    I(w;\hat{x}) < I(w;x_w).
\end{equation}


\begin{theorem}[Attack Effectiveness]\label{th:2}
Let $\hat{x}$ be a denoised image. If $I(w;\hat{x}) < I(w;x_w)$, then
\begin{equation}
    \boxed{BA(w,\mathcal{D}(\hat{x})) < BA(w,\mathcal{D}(x_w))}.
\end{equation}
i.e., the watermark becomes unrecoverable from $\hat{x}$.
\end{theorem}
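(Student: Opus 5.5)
The plan is to read $BA$ as an expected bit accuracy, $\mathbb{E}[BA(w,\mathcal{D}(\cdot))]$ with $m=k$ bits, and to prove the inequality through a bitwise form of Fano's inequality. Fano's inequality turns a deficit in mutual information into a lower bound on decoding error. The hypothesis $I(w;\hat{x})<I(w;x_w)$ is exactly such a deficit, and Theorem~\ref{th:1} tells us where it comes from: the Markov chain $w\to x_w\to x_w+n\to\hat{x}$. I should say at the outset that the implication cannot hold for an arbitrary decoder $\mathcal{D}$ and an arbitrarily small gap. A badly designed $\mathcal{D}$ could even do better on $\hat{x}$. So the argument has to make explicit the regime the paper already invokes: $\mathcal{D}$ is the designated decoder, near-optimal on $x_w$, so that $I(w;x_w)\approx H(w)$ and $BA(w,\mathcal{D}(x_w))\approx 1$, and the information gap is non-negligible.

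\textbf{Step 1 (bitwise Fano).} Let $w_i$ be the $i$-th bit and $P_{e,i}=\Pr[\mathcal{D}(\hat{x})_i\neq w_i]$. Fano's inequality for binary variables gives $h_2(P_{e,i})\ge H(w_i\mid \mathcal{D}(\hat{x})_i)\ge H(w_i\mid\hat{x})$, where $h_2$ is the binary entropy; the second inequality is data processing through $\mathcal{D}$. Summing over $i$ and using subadditivity of conditional entropy gives $\sum_i h_2(P_{e,i})\ge H(w\mid\hat{x})=H(w)-I(w;\hat{x})$. Since $h_2$ is concave, Jensen's inequality yields
\[
k\,h_2\bigl(1-\mathbb{E}[BA(w,\mathcal{D}(\hat{x}))]\bigr)\ \ge\ H(w)-I(w;\hat{x}).
\]
With $w$ uniform we have $H(w)=k$. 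If we restrict $h_2$ to $[0,1/2]$, then whenever the right-hand side is positive the expected bit error on $\hat{x}$ is bounded away from $0$.

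\textbf{Step 2 (the watermarked side).} For $x_w$ the same argument is not needed in reverse. Instead I use the operating assumption from the proof of Theorem~\ref{th:1}, that $\mathcal{D}$ decodes $x_w$ with high accuracy. Concretely, assume $1-\mathbb{E}[BA(w,\mathcal{D}(x_w))]\le\delta$, where $\delta$ is small enough that $k\,h_2(\delta)<I(w;x_w)-I(w;\hat{x})$ whenever $I(w;x_w)\approx H(w)$.

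\textbf{Step 3 (combine).} Comparing the two sides, the bit error on $\hat{x}$ is at least $h_2^{-1}\bigl((H(w)-I(w;\hat{x}))/k\bigr)$. This exceeds $\delta$, which is at least the bit error on $x_w$. Hence $\mathbb{E}[BA(w,\mathcal{D}(\hat{x}))]<\mathbb{E}[BA(w,\mathcal{D}(x_w))]$.

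The main obstacle is Step 3's strictness. The hypothesis only says the information drops by some amount, while Fano's bound converts that drop into error with a slack term. I will therefore state the result under the quantitative condition $H(w)-I(w;\hat{x})>k\,h_2(\delta)$, which holds when the gap is non-negligible and $\delta$ is small. I will then note that, for $I(w;\hat{x})\to 0$, the bound forces $BA\to 1/2$. This is the ``unrecoverable'' reading in the statement.
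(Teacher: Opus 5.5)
Your argument is correct under the hypotheses you make explicit, but it takes a different route from the paper's.

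\textbf{The paper's route.} It applies Fano's inequality once to the whole message, with block error $P_e=\Pr(\hat{w}\neq w)$, and rearranges to $P_e\ge (H(w\mid\hat{x})-1)/\log(|\mathcal{W}|-1)$. It then reads off that a larger $H(w\mid\hat{x})$ means a larger $P_e$, and hence a smaller expected BA. This is shorter, but it has exactly the gaps you anticipate:
\begin{itemize}
\item A rising lower bound on $P_e$ does not force $P_e$ itself to rise.
\item A block-error bound reaches the average bit error only through the union bound $\Pr(\hat{w}\neq w)\le k\,(1-\mathbb{E}[BA])$. That gives at best $1-\mathbb{E}[BA]\ge (H(w\mid\hat{x})-1)/k^2$.
\item Nothing bounds the decoder's error on $x_w$, so there is nothing to compare against.
\end{itemize}

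\textbf{What your route buys.} Your bitwise Fano, subadditivity and Jensen chain gives $k\,h_2(1-\mathbb{E}[BA])\ge H(w\mid\hat{x})$ directly for the quantity BA measures. It is strong enough to force $BA\to 1/2$ as $I(w;\hat{x})\to 0$, which the block route cannot. Your $\delta$-assumption supplies the missing comparison with $x_w$.

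\textbf{The price.} You need extra hypotheses, but some are unavoidable because the literal statement is false. Suppose $x_w$ carries two independent noisy copies of $w$ and $\mathcal{D}$ reads only the first. Erasing the second copy strictly lowers $I(w;\cdot)$ but leaves BA unchanged. The paper relies on the same kind of assumption, but only implicitly, through ``$I(w;x_w)\approx H(w)$'' in the proof of Theorem~\ref{th:1}.

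\textbf{Two tidy-ups.}
\begin{itemize}
\item The clause ``whenever $I(w;x_w)\approx H(w)$'' in Step~2 is unnecessary. Since $I(w;x_w)\le H(w)$, the condition $k\,h_2(\delta)<I(w;x_w)-I(w;\hat{x})$ already implies $k\,h_2(\delta)<H(w)-I(w;\hat{x})$, which is all Step~3 uses.
\item You should require $\delta<1/2$, so that $h_2$ is strictly increasing on $[0,\delta]$. A bit error above $1/2$ on $\hat{x}$ trivially exceeds $\delta$, so restricting $h_2^{-1}$ to $[0,1/2]$ costs nothing.
\end{itemize}
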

\vspace{-1em}

\noindent\textbf{\textit{Proof.}} By Theorem~\ref{th:1}, reducing $I(w;\hat{x})$ leads to an increase in $H(w|\hat{x})$. Applying Fano’s inequality~\cite{mackay2003information}, we obtain:
\begin{equation}
    H(w|\hat{x}) \leq H(P_e) + P_e \log(|\mathcal{W}|-1),
        \vspace{-0.5em}
\end{equation}
where $H(P_e) = -P_e \log P_e - (1 - P_e) \log(1 - P_e)$ is the binary entropy function. $ P_e = \Pr(\hat{w} \neq w)$ denotes the decoding error probability of the $\hat{w}$, where $\hat{w}$ is the extracted watermark by $\mathcal{D}(\hat{x})$. $|\mathcal{W}|$ denotes the size of the watermark space. Rearranging gives the lower bound:
\begin{equation}
    P_e \geq \frac{H(w|\hat{x})-1}{\log(|\mathcal{W}|-1)}.
    \vspace{-0.5em}
\end{equation}
Thus, as $H(w|\hat{x})$ increases, $P_e$ increases, resulting in a decrease in the expected bit accuracy:
\begin{equation}
    BA(w,\mathcal{D}(\hat{x})) < BA(w,\mathcal{D}(x_w)).
    \vspace{-0.5em}
\end{equation}

In summary, the above proof establishes a formal guarantee that as the attack increases conditional entropy more effectively, the corresponding error rate in watermark decoding inevitably rises.

\begin{table*}[t]
    \footnotesize
    \caption{Visual quality comparison of various watermark removal attacks.}
    \centering

    \resizebox{\textwidth}{!}{
    \begin{tabular}{ c cccc cccc cccc cccc}
        \toprule
        \multirow{2}{*}{\textbf{WM}}
        & \multicolumn{4}{c}{\textbf{SS}} 
        & \multicolumn{4}{c}{\textbf{Yu }}
        & \multicolumn{4}{c}{\textbf{PTW }} 
        & \multicolumn{4}{c}{\textbf{HiDDeN }}
        \\ 
        
        \cmidrule(lr){2-5} 
        \cmidrule(lr){6-9} 
        \cmidrule(lr){10-13} 
        \cmidrule(lr){14-17} 

        &PSNR &SSIM &LPIPS &A-FINE
        &PSNR &SSIM &LPIPS &A-FINE
        &PSNR &SSIM &LPIPS &A-FINE
        &PSNR &SSIM &LPIPS &A-FINE

        \\
        \midrule
        JPEG 
        &41.61 &0.98&0.02& 37.93
        &47.80&0.99&0.01&44.96
        &37.19&0.98&0.04&36.51
        &39.68 &	0.97 &0.02 &29.42	
        \\
        
        Crop 
        &14.27&0.46&0.43&49.96
        &18.43&0.53&0.22&51.16
        &15.95&0.47&0.34&53.35
        &16.40&0.39&0.39&52.06

        \\
        
        Blur 
        &31.60&0.92&0.17&52.66
        &35.01&0.95&0.10&54.75

        &33.68&0.94&0.13&59.66
        &29.64&0.88&0.21&62.82

        \\
        
        Noise 
        &26.22&0.53&0.33&54.85
        &26.13&0.55&0.33&60.71
        &26.18&0.54&0.34&55.89
        &26.06&0.58&0.32&55.32
        \\
        \midrule
        DiffusionAttack 
        &27.01&0.78&0.32&40.47
        &31.96&0.87&0.21&37.29
        &26.01&0.73&0.27&34.81
        &30.22&0.84&0.18&38.77
        \\

        VAEAttack 
        &32.12&0.89&0.25&38.03
        &33.38&0.91&0.21&67.51
        &32.30&0.89&0.25&57.39
        &29.93&0.84&0.31&49.71
        \\
        
        UnMarker
        &21.86&0.52&0.01&   45.99
        &17.77&0.42&0.14& 53.78
        &15.84&0.43&0.13& 44.36
        &23.54&0.62&0.21&48.07
        \\
        \midrule
        \textit{MarkSweep} \textbf{w/o} SR
        &30.89  &0.89 &0.13 &41.89
        &21.73  &0.65&0.19&53.92
        &31.54  &0.91   &0.11 &45.88
        &28.93  &0.84 &0.20&45.46
        \\
        \textit{MarkSweep}
        &28.69&0.86&0.20&39.56
        &29.75&0.88&0.17&50.96
        &28.99&0.87&0.17&43.55
        &27.85&0.82&0.23&50.67
     \\
     \bottomrule   
    \end{tabular}}
    \label{tab:metric}
    \vspace{-1em}
\end{table*}

\begin{figure*}[t]
    \centering
    \includegraphics[width=0.6\linewidth]{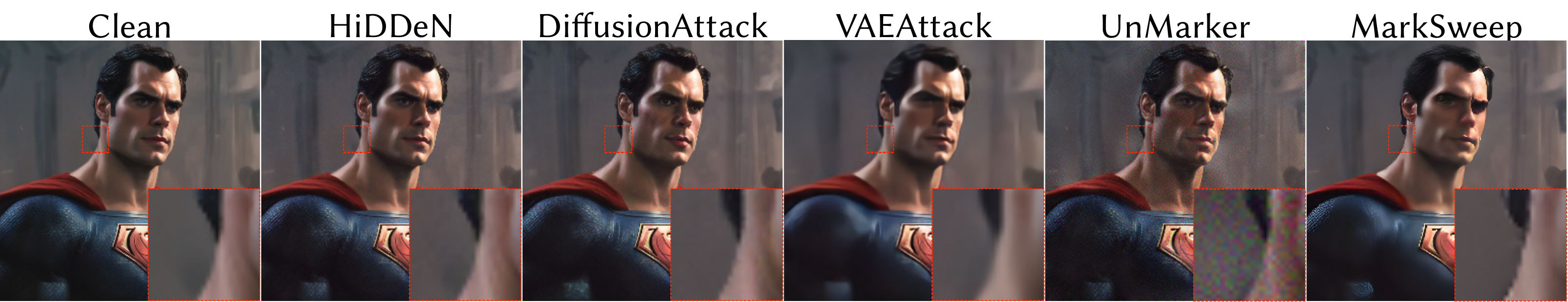}
        \vspace{-1em}
    \caption{Outputs of the SOTA removal attacks against HiDDeN watermarking method.}
    \label{fig:HiDDeNSample}
    \vspace{-2em}
\end{figure*}

\section{Experiment}
\subsection{Experimental Setup}

\textbf{Training.} Our denoising network is trained on 5,000 clean images from the natural dataset MS-COCO~\cite{lin2014microsoft} and 5,000 images from the AI-generated dataset GenImage~\cite{zhu2023genimage}, enabling the model to learn from both natural and AI-generated image patterns. The implementation is based on PyTorch, and training is conducted for 500 epochs with a batch size of 80 and an initial learning rate of 0.001. We use the Adam optimizer~\cite{adam} for parameter updates, and a warm-up cosine learning rate scheduler~\cite{loshchilov2016sgdr} is adopted to ensure stable convergence. All experiments are performed on a single NVIDIA A6000 GPU. During training and evaluation, image patches of size $224 \times 224$ are used, with 80\% of the data allocated for training and 20\% for validation. For noise intensification, the parameters are set as $\mu=0$, $\sigma=50$, and $s=5$. The loss function weights are configured as $\lambda_1 = 1$, $\lambda_2 = 35$, $\lambda_3 = 0.2$, and $\lambda_4 = 20$.

\textbf{Watermarking methods.} We compare \textit{MarkSweep} with SS~\cite{fernandez2023stable}, Yu~\cite{yuresponsible}, PTW~\cite{lukas2023ptw}, and HiDDeN~\cite{zhu2018hidden}, using detection thresholds of 69\%, 61\%, 70\%, and 73\%, respectively, as in~\cite{kassis2025unmarker}. Yu and PTW are generator-level watermarking methods designed for Generative Adversarial Networks (GANs), SS is a text-to-image watermarking scheme based on Stable Diffusion Models (SDMs), and HiDDeN represents a post-hoc watermarking approach. For evaluation, we use CelebA-HQ as the test dataset for Yu and PTW, Stable Diffusion Prompts (SDP) for SS, and images generated from SDP with SDXL 1.0 (base) for HiDDeN. For each method, 100 images are randomly sampled for attack evaluation.


\textbf{Metrics.} We evaluate the quality of de-watermarked image using Peak Signal-to-Noise Ratio (PSNR)\textcolor{green}{$\blacktriangle$}, Structural Similarity Index Measure (SSIM\textcolor{green}{$\blacktriangle$}), LPIPS\textcolor{red}{$\blacktriangledown$}, and the scaled Adaptive Fidelity-Naturalness Evaluator (A-FINE\textcolor{red}{$\blacktriangledown$})~\cite{chen2025toward}. A-FINE is a novel FR-IQA metric that relaxes the need for a perfect reference, allowing reliable quality assessment even when the reference is imperfect or inferior to the test image. This makes it well suited for evaluating de-watermarked images when watermark-free references are unavailable. In addition, we use BA (Eq. \eqref{eq:bit acc}) to quantify the effectiveness of watermark removal, and attack speed (AS) to evaluate its computational efficiency.

\textbf{Baseline attacks.} The baseline attack comprises a sequence of standard image distortions: Gaussian blur (kernel size = 10), the addition of Gaussian noise ($\mu = 0$, $\sigma = 0.05$), a 90\% central crop, and JPEG compression.

\textbf{SOTA attacks.} We compare \textit{MarkSweep} against SOTA watermark removal attacks, including DiffusionAttack~\cite{zhao2024invisible}, VAEAttack~\cite{an2024waves}, and UnMarker~\cite{kassis2025unmarker}. All of these methods are representative black-box or no-box attacks with demonstrated effectiveness.

\vspace{-1em}


    
        

\subsection{Attack Performance}
\textbf{Attack Effectiveness.} As shown in Table~\ref{tab:BA}, \textit{MarkSweep} shows promising attack effectiveness. It not only reduces BA but also pushes it below the detection thresholds of several watermarking schemes. For instance, on HiDDeN, it achieves 51.32\% BA, below the 73\% threshold and lower than VAEAttack (52.20\%), while remaining close to UnMarker (50.33\%). Overall, \textit{MarkSweep} poses a significant threat for invisible watermarking techniques. 

\begin{table}[t]
    \footnotesize
    \caption{BA comparison under various attacks.}
    \centering

    \resizebox{0.48\textwidth}{!}{
    \begin{tabular}{ c cc cc cc c|c }  
        \toprule
       \textbf{ Attacks} & \textbf{JPEG} &\textbf{Crop} & \textbf{Blur} & \textbf{Noise} & \textbf{DiffusionAttack} &\textbf{VAEAttack} &\textbf{UnMarker} & \textbf{\textit{MarkSweep}}\\
        \midrule
        SS &98.10 & 86.87 &98.22   &91.61 &48.40    &65.33&56.38&66.83\\
        Yu &89.75 &74.47 &94.57         &64.58&58.34&56.94&59.95&59.24\\
        PTW &83.29 &85.59 &75.20        &50.76&57.59&66.38&63.24&72.19\\
        HiDDeN &65.00 &88.73 &64.18     &51.41&51.63&52.20&50.33&51.32\\
     \bottomrule   
    \end{tabular}}
    \label{tab:BA}
    \vspace{-2em}
\end{table}

\textbf{Visual Quality.} \textit{MarkSweep} achieves competitive visual quality compared to other watermark removal attacks across all watermarking schemes. As shown in Table~\ref{tab:metric}, in most cases, \textit{MarkSweep} maintains high PSNR (e.g., 28.99 dB on PTW and 28.69 dB on SS) and SSIM scores ($\geq 0.82$), indicating that the structural fidelity of the images is well preserved. Although some classical distortion attacks, such as JPEG compression, yield higher PSNR and SSIM values, they often fail to fully remove the watermark. \textit{MarkSweep} effectively balances removal effectiveness with perceptual quality, as reflected in its moderate LPIPS scores and consistent A-FINE performance. Although \textit{MarkSweep} alone maintains image quality, the SR module provides additional benefits in selected scenarios by improving the visual quality of denoised images. These results demonstrate that \textit{MarkSweep} provides a favorable trade-off between watermark removal success and image quality preservation.

Furthermore, as shown in Fig.~\ref{fig:HiDDeNSample}, the effectiveness of \textit{MarkSweep} is obvious: It removes the high-frequency watermark signals introduced by HiDDeN while avoiding the excessive blurring observed in DiffusionAttack and VAEAttack. Although UnMarker achieves strong attack performance and excellent image quality metrics, it introduces noticeable checkerboard artifacts that are easily perceptible to the human visual system.

\begin{figure}[t]
    \centering
    \includegraphics[width=0.8\linewidth]{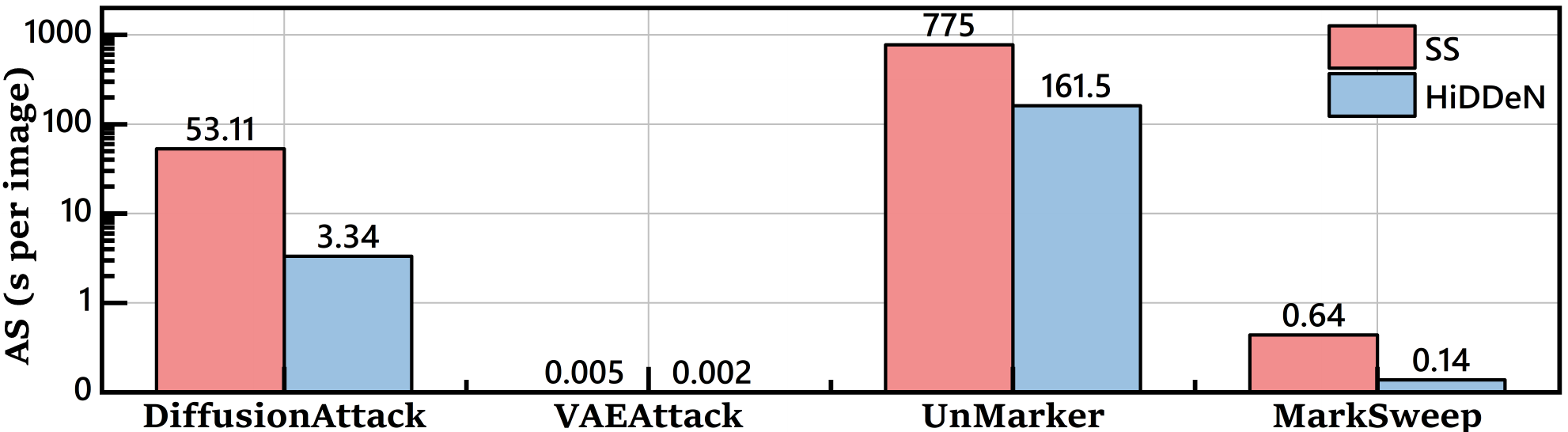}
        \vspace{-1em}
    \caption{Comparison of AS across SOTA attacks.}
    \label{fig:AS}
    \vspace{-2em}
\end{figure}

\textbf{Attack Efficiency.} Fig.~\ref{fig:AS} shows that \textit{MarkSweep} is the second fastest attack, only slower than VAEAttack. \textit{MarkSweep} requires 0.64 seconds per image on SS and 0.14 seconds on HiDDeN, which is 83$\times$ faster than DiffusionAttack on SS and 24$\times$ faster on HiDDeN. Compared to UnMarker, \textit{MarkSweep} is over 1,210$\times$ faster on SS and more than 1,153$\times$ faster on HiDDeN. Unlike \textit{MarkSweep}, which maintains stable efficiency, UnMarker relies on costly iterative optimization, resulting in extremely slow attack speeds.

\textbf{Limitation.} Although our MarkSweep can effectively remove the watermarks and preserve image quality for emerging fine-tuning-based and post-hoc watermarking schemes (e.g., StableSignature, HiDDeN), the effectiveness is limited on the emerging initial noise-based watermarking schemes (e.g., Gaussian Shading \cite{yang2024gaussian}), aligning with SOTA attacks (e.g., UnMarker~\cite{kassis2025unmarker}).

\section{Conclusion}
In this paper, we propose \textit{MarkSweep}, a novel AI-generated image watermark removal attack, which effectively suppresses invisible watermarks while maintaining high perceptual quality, based on noise intensification and suppression. \textit{MarkSweep} operates under no-box settings, making it a realistic and substantial threat to current watermarking schemes. Therefore, we encourage new designs of more robust watermarking schemes for AI-generated images against frequency-aware attacks, thereby mitigating the growing risks of disinformation and copyright infringement involving Gen-AI.
Our theoretical analysis provides a provable attack guarantee, and extensive experiments demonstrate that \textit{MarkSweep} achieves a trade-off between attack effectiveness, efficiency, and image fidelity, outperforming existing watermark removal methods. In future work, we will evaluate the robustness of emerging text-to-image watermarking methods that are based on initial noise of SDMs.

{\small
\bibliographystyle{IEEEbib}
\bibliography{refs}
}
\end{document}